\documentclass[letterpaper]{article} % DO NOT CHANGE THIS
\usepackage[]{aaai2027}  % DO NOT CHANGE THIS
\nocopyright
\makeatletter\newif\ifisChecklistMainFile\isChecklistMainFiletrue\makeatother
\usepackage[hyphens]{url}  % DO NOT CHANGE THIS
\usepackage{graphicx} % DO NOT CHANGE THIS
\usepackage{natbib}  % DO NOT CHANGE THIS AND DO NOT ADD ANY OPTIONS TO IT
\usepackage{caption} % DO NOT CHANGE THIS AND DO NOT ADD ANY OPTIONS TO IT
\usepackage{amsmath}
\usepackage{amssymb}
\usepackage{amsthm}
\newtheorem{theorem}{Theorem}
\usepackage{tikz}
\usetikzlibrary{arrows.meta,positioning,shapes.geometric,fit,calc}

\usepackage{algorithm}
\usepackage{algorithmic}

\usepackage{newfloat}
\usepackage{listings}
\DeclareCaptionStyle{ruled}{labelfont=normalfont,labelsep=colon,strut=off} % DO NOT CHANGE THIS
\floatstyle{ruled}
\newfloat{listing}{tb}{lst}{}
\floatname{listing}{Listing}

\usepackage{booktabs}

\title{OwlPath: Lossless Knowledge Compression for LLM Bug Repair}

\author{Bo Zhang, Ren Pan, Huan Chen, Xiang Song}
\affiliations{Shunfeng Technology Co., Ltd., China\\
\texttt{zhangbo111@sf-express.com}}

\begin{document}

\maketitle

\begin{abstract}

LLM-based software engineering agents are fundamentally limited by
their context window: with only $\sim$100K tokens of capacity, they
must store the structurally relevant code subset required to resolve
each bug.
Conventional retrieval tools model code as flat text, forcing
agents to resolve multi-hop structural dependencies---subclass
chains, transitive callers, and interface implementations---via
slow iterative trial and error. We address this gap with
\textbf{lossless knowledge compression}: encoding source programs
into an OWL2 ontology to answer structural queries using minimal
relevant code fragments.

We introduce \textbf{OwlPath}, an OWL2 reasoning layer built on
top of CodeGraph---a widely adopted code intelligence platform
with 500K+ GitHub stars---with a unified CLI for structural code
retrieval. OwlPath supports multi-language codebases through
tree-sitter parsing (Python, JavaScript/TypeScript, Go, and more),
encoding language-specific structural semantics into a unified OWL2
ontology. OwlPath integrates two complementary mechanisms. First,
a \textbf{transitive-closure engine} retrieves all structurally
linked symbols in a single SPARQL property-path query, capturing
multi-hop relations invisible to string matching. Second, an
\textbf{OWL Software Knowledge Map (OWL-SKM)} advisory layer
precomputes a compact 3KB structured summary---module directory
tree, core APIs, and issue-relevant symbols---that guides the
agent to the correct module on its first query.

Evaluated on 18 matched SWE-bench Pro instances, OwlPath consumes
28.8\% fewer tokens and 39.5\% less wall-clock time than the
CodeGraph-only baseline---a substantial efficiency gain over this
industry-grade platform---while achieving a comparable strict-apply
rate (68.4\% on 19 runs vs 66.7\% on 18 runs). Cross-benchmark
validation on 9 SWE-bench Lite instances confirms these gains:
78\% correct rate (vs 67\%) and 21.1\% time saving. In an offline
retrieval experiment on 67 instances, OwlPath achieves 2.06$\times$
recall improvement (0.464 vs 0.226) and 88.1\% hit rate versus
59.7\% for CodeGraph. On a separate 37-question structural
retrieval benchmark, OwlPath improves recall from 4.4\% to 28.8\%,
scoring 69--80\% on transitive caller and interface tasks.

\label{sec:abstract}
\end{abstract}

% Links removed for anonymous submission. Uncomment for camera-ready.
% \begin{links}
% \link{Code}{...}
% \link{Datasets}{...}
% \link{Extended version}{...}
% \end{links}

\section{Introduction}

LLM-based software engineering agents operate under a fundamental
constraint: their context window can hold only $\sim$100K tokens,
yet a typical repository contains millions of lines of code.
The agent must therefore retrieve a small, structurally relevant
subset of code before it can begin reasoning about a bug fix.
Conventional retrieval tools---grep, BM25, embedding search---model
code as flat text, returning files that contain matching strings.
This approach fails when the bug's ground truth is structurally
connected to the issue description but shares no string overlap:
a subclass chain, a transitive caller path, or an interface
implementation hierarchy.

We address this gap with \textbf{lossless knowledge compression}:
encoding source programs into an OWL2 ontology so that structural
queries can be answered with minimal relevant code fragments.
We introduce \textbf{OwlPath}, an OWL2 reasoning layer built on
top of CodeGraph~\footnote{A widely adopted code intelligence
graph database} with a unified CLI. OwlPath supports multi-language
codebases (Python, JavaScript/TypeScript, Go) through tree-sitter
parsing. OwlPath integrates two
complementary mechanisms. First, a \textbf{transitive-closure
engine} retrieves all structurally linked symbols in a single
SPARQL property-path query, capturing multi-hop relations
invisible to string matching. Second, an \textbf{OWL Software
Knowledge Map (OWL-SKM)} advisory layer precomputes a compact
3KB structured summary---module directory tree, core APIs, and
issue-relevant symbols---that guides the agent to the correct
module on its first query.

Our contributions are four-fold:

\begin{enumerate}
\item \textbf{OWL2 ontology projection.} We formalize the
translation of a tree-sitter-extracted code graph (SQLite
nodes and edges) into an OWL2 ontology, preserving all
structural relationships (inheritance, calls, interface
implementation, containment) as OWL2 object properties.
\item \textbf{Transitive-closure retrieval engine.} We
implement property-path queries (SPARQL 1.1) over the
OWL2 ontology, materialising the transitive closure once
and serving subsequent queries in amortised O(1) time.
The equivalent SQL recursive CTE baseline costs O(n\textsuperscript{k})
for k-hop queries.
\item \textbf{OWL-SKM advisory layer.} We design a two-layer
knowledge map that extracts a module directory tree
(Layer 1) and issue-matched symbol candidates (Layer 2)
from the code graph, compressing the agent's initial
search space from thousands of files to a 3KB summary.
\item \textbf{Fair, controlled evaluation on SWE-bench Pro.}
We evaluate OwlPath on 23 stratified instances from
SWE-bench Pro (731 total), comparing against a string-match
baseline under identical conditions: same agent (Hermes),
same prompt template, same tool-use budget, same leak-proof
git sealing. In a separate offline retrieval experiment on
67 instances, we measure OwlPath's recall and MRR against
pure codegraph search.
\end{enumerate}

On 18 matched instances where both arms ran, OwlPath achieves
a 68.4\% strict-apply rate (13/19) versus 66.7\% (12/18) for
the codegraph-only baseline, while consuming 28.8\% fewer
tokens (1,416K vs 1,989K) and 39.5\% less wall-clock time
(648s vs 1,071s).
In the offline retrieval experiment on 67 instances, OwlPath's
v2 SKM achieves 0.464 recall---2.06$\times$ the codegraph
baseline (0.226)---and 88.1\% hit rate versus 59.7\%. On a
separate 37-question structural retrieval benchmark, OwlPath
improves recall@all from 4.4\% to 28.8\% and achieves 69--80\%
recall on transitive caller and interface implementation tasks.

\section{Related Work}

\paragraph{Code retrieval for software engineering agents.}
Existing agents retrieve code through three main strategies:
\textbf{string matching} (grep, ripgrep), \textbf{embedding
similarity} (BM25, dense retrieval, codebert-style models),
and \textbf{graph-based} (dependency graphs, call graphs).
String matching is simple and fast but fails on structural
queries where the target symbol shares no literal overlap with
the query. Embedding methods improve recall but require
precomputed embeddings that are costly to update and can miss
rare symbols. Graph-based methods (CodeGraph, SourceGraph)
provide 1-hop neighbourhood queries but lack transitive closure
out of the box---the agent must iterate manually, one hop at a
time, which is both slow and turn-budget-expensive.

\paragraph{OWL ontologies in software engineering.}
Ontology-based code representation has been explored in
maintenance and reverse engineering. These works focus on
documentation and traceability rather than runtime retrieval
for LLM agents. The W3C OWL2 standard~\citep{owl2spec}
provides the formalism we exploit: transitive property paths
that produce sound and complete answers without iterative
traversal.

\paragraph{SWE-bench and the agent evaluation landscape.}
SWE-bench~\citep{swebench} and its Pro variant
provide a standardised benchmark for software engineering
agents. Recent work on agent frameworks~\citep{openhands, sweagent, react} focuses on planning
and multi-step reasoning, while we focus on the retrieval
step that feeds the agent's context. Our work is orthogonal:
improving the retrieval quality improves any downstream
agent, regardless of its planning strategy.

\section{Method}

\begin{figure}[t]
\centering
\begin{tikzpicture}[
  >=Stealth,
  node distance=0.35cm and 0.3cm,
  every node/.style={font=\scriptsize},
  proc/.style={draw, rounded corners, minimum height=0.5cm, minimum width=1.0cm,
               align=center, inner sep=2pt, line width=0.4pt, font=\scriptsize},
  store/.style={draw, cylinder, shape border rotate=90, minimum height=0.5cm,
                minimum width=0.8cm, align=center, font=\scriptsize, line width=0.4pt,
                inner sep=1pt},
  agent/.style={draw, rounded corners, minimum height=0.5cm, minimum width=1.2cm,
               align=center, inner sep=2pt, line width=0.5pt, font=\scriptsize,
               fill=gray!15},
  arrow/.style={->, line width=0.5pt},
  darrow/.style={->, line width=0.5pt, dashed, gray},
  lab/.style={font=\scriptsize, text=gray!60}
]

% === Offline pipeline (top row) ===
\node[proc, fill=gray!10] (src) {Source\\Code};
\node[proc, fill=blue!10, right=of src] (ts) {tree-sitter\\AST};
\node[store, fill=blue!5, right=of ts] (db) {SQLite};
\node[proc, fill=green!10, right=of db] (proj) {OWL2\\Project};
\node[store, fill=green!5, right=of proj] (owl) {.owl};

\draw[arrow] (src) -- (ts);
\draw[arrow] (ts) -- (db);
\draw[arrow] (db) -- (proj);
\draw[arrow] (proj) -- (owl);

% === Runtime layer ===
\node[proc, fill=yellow!15, below=0.7cm of db] (reasoner) {SPARQL\\Reasoner};
\node[proc, fill=orange!15, left=of reasoner] (skm) {OWL-SKM\\(3KB)};
\node[proc, fill=yellow!10, right=of reasoner] (closure) {Closure\\Cache};

\draw[arrow] (owl.south) |- (reasoner.east);
\draw[darrow] (reasoner) -- (closure);
\draw[arrow] (db.south) -- (skm.north);

% === Agent (bottom) ===
\node[agent, below=0.6cm of skm] (agent) {LLM Agent};
\node[proc, fill=red!5, right=1.5cm of agent,
      text width=1.5cm] (tools) {\texttt{owlpath search}\\\texttt{-{}-with-closure}};

\draw[arrow] (skm) -- node[lab, left]{advise} (agent);
\draw[arrow] (agent) -- node[lab, above]{call} (tools);
\draw[darrow] (tools.north) -- node[lab, right]{query} (reasoner.south);
\draw[darrow] (closure.south) -- (tools.north west);

% === Baseline box ===
\node[draw, dashed, fill=yellow!5, rounded corners, inner sep=2pt,
      right=0.25cm of closure, align=left, font=\scriptsize,
      text width=1.3cm]
  (baseline) {Baseline:\\SQL CTE\\$O(n^k)$\\per k-hop};

% === Output ===
\node[proc, fill=gray!15, below=0.5cm of tools] (patch) {Bug Fix\\Patch};
\draw[arrow] (tools) -- (patch);

\end{tikzpicture}
\caption{OwlPath system architecture and data flow. Offline
(top): source code is parsed by tree-sitter into a SQLite graph,
then projected into an OWL2 ontology. Runtime (middle): the
SPARQL reasoner serves queries from a materialised closure cache
($O(1)$ amortised), while the OWL-SKM provides a 3KB summary.
The agent autonomously calls \texttt{owlpath search} to receive
structural context for producing a bug fix patch.}
\label{fig:architecture}
\end{figure}

OwlPath is an OWL2 reasoning layer that sits between a
tree-sitter-extracted code graph (CodeGraph's SQLite database)
and an LLM agent. The pipeline has three stages: (1) extract
a code graph from source via tree-sitter, (2) project the
graph into an OWL2 ontology, (3) serve structural queries via
SPARQL property paths with an optional SKM advisory.

\subsection{CodeGraph: tree-sitter extraction}

CodeGraph analyses each source file with tree-sitter, producing
a SQLite database of \textbf{nodes} (symbols: classes, functions,
methods, interfaces, structs, variables, routes) and
\textbf{edges} (relations: \texttt{extends}, \texttt{implements},
\texttt{calls}, \texttt{references}, \texttt{contains},
\texttt{imports}). Each node carries a \texttt{qualified\_name},
\texttt{file\_path}, \texttt{kind}, and \texttt{start\_line}/
\texttt{end\_line}. The extraction is deterministic: identical
source produces bit-identical tree-sitter CSTs and therefore
bit-identical databases.

\subsection{OwlPath: OWL2 ontology projection}

We project the SQLite code graph into an OWL2 ontology
(``.owl'' file, RDF/XML serialisation) via a single SQL pass
that reads every node and every edge:

\begin{itemize}
\item Each \textbf{node} becomes an \texttt{owl:NamedIndividual}
  with \texttt{rdfs:label} = \texttt{qualified\_name}, and
  data properties for \texttt{filePath}, \texttt{kind},
  \texttt{line}, \texttt{endLine}, \texttt{qualifiedName}.
\item Each \texttt{extends} edge becomes \texttt{rdfs:subClassOf},
  declared \texttt{owl:TransitiveProperty}.
\item Each \texttt{implements} edge becomes \texttt{:implements},
  declared \texttt{owl:TransitiveProperty}.
\item Each \texttt{calls} edge becomes \texttt{:calls} with
  \texttt{owl:inverseOf} and \texttt{owl:TransitiveProperty}.
\item Each \texttt{contains} edge becomes \texttt{:contains}
  (non-transitive, for class-to-method containment).
\item Each \texttt{references} edge becomes \texttt{:references}.
\item Each \texttt{imports} edge becomes \texttt{:imports},
  declared \texttt{owl:TransitiveProperty}.
\end{itemize}

The projection is a bijection: every source tuple produces
exactly one OWL axiom, and distinct tuples produce distinct
axioms. This guarantees that no structural information is
lost during the ontology encoding---the \textbf{lossless}
property we claim.

\paragraph{Example.} Given Python source
\texttt{class AdminService(BaseService)}, tree-sitter extracts
a class node and an \texttt{extends} edge. The projection
emits the following OWL2 RDF/XML fragment:
\begin{lstlisting}[basicstyle=\ttfamily\tiny,breaklines=true,frame=single,aboveskip=2pt,belowskip=2pt]
<NamedIndividual IRI="#AdminService">
  <type IRI="#Class_"/>
  <extends IRI="#BaseService"/>
  <qualifiedName>AdminService</qualifiedName>
  <filePath>services.py</filePath>
</NamedIndividual>
\end{lstlisting}
The \texttt{extends} assertion, declared
\texttt{owl:TransitiveProperty}, enables SPARQL
\texttt{:extends+} to retrieve \texttt{AdminService} as a
descendant of any ancestor in a single query---without
iterative traversal.

\subsection{Transitive-closure engine}

The core retrieval operation is a transitive-closure query:
``find all subclasses of X'', ``find all transitive callers
of Y''. We implement these via SPARQL 1.1 property paths
(\texttt{:extends+}, \texttt{:calls+}, \texttt{:implements+}),
executed by the \texttt{rdflib} library (no JVM dependency).

The closure is materialised once on first query (a single
SPARQL \texttt{SELECT DISTINCT ?s ?o WHERE \{ ?s :p+ ?o \}}
pass) and cached in an in-memory set of (s, o) pairs.
Subsequent queries are constant-time indexed lookups against
this cache. The one-time materialisation cost is O(n + m) for
n symbols and m edges, which is negligible compared to the
tree-sitter extraction cost.

The key advantage over the SQL baseline: a recursive CTE
that walks k hops costs O(n\textsuperscript{k}) in the worst
case (Cartesian explosion at each hop), while the OWL property
path is a single pass. For a 7-level inheritance chain, this
difference is several orders of magnitude.

\subsection{OWL-SKM: advisory layer}

The Software Knowledge Map (SKM) is a compact 3KB structured
summary generated at runtime from the code graph and the
issue text. It has two layers:

\paragraph{Layer 1: Module Map.} We run \texttt{codegraph files}
to obtain the repository's file tree, then parse it into a
hierarchical module structure (top-level packages, their file
counts, and symbol counts). Each module is scored by information
density: $\mathrm{score} = \log(\mathrm{symbols}) \times 0.5 +
\log(\mathrm{files}) \times 0.3 + \log(\mathrm{symbols per file})
\times 0.2$. The top modules are included in the summary.

\paragraph{Layer 2: Issue Map.} We extract 4--8 keywords from
the problem statement and test patch (camelCase, snake\_case,
and uppercase identifiers), then run \texttt{codegraph explore}
for each keyword. Matching symbols (name, file, line) are
collected as issue-relevant candidates.

The combined SKM is emitted as a one-time advisory on the
agent's first \texttt{owlpath search} call, before any query
results are returned. The agent reads the SKM, identifies the
likely-correct module, and issues its first targeted query
based on that guidance.

\subsection{Lossless compression guarantee}

The OWL2 projection is lossless in the following sense: for
any structurally connected ground-truth set $\Delta$ that is
reachable by a SPARQL property path $P$ from the query anchor,
the OwlPath result $R_{\mathrm{owl}}(G, q)$ satisfies
$R_{\mathrm{owl}}(G, q) \supseteq \Delta$. This holds because
every edge in the code graph is projected to an OWL axiom,
and property paths traverse the transitive closure of those
axioms. Conversely, $R_{\mathrm{owl}}(G, q) \subseteq
R_{\mathrm{sql}}(G, q)$ where $R_{\mathrm{sql}}$ is the k-hop
SQL closure, because every SPARQL path of length k is contained
in the k-hop SQL closure. The practical compression---returning
fewer, higher-precision results---comes from the fact that
OwlPath stops at the structural boundary of the query, while
string matching returns every file that happens to contain
the query string, regardless of structural relevance.

\subsection{Complexity analysis}

The end-to-end cost per repository is:
\begin{itemize}
\item $T_{\mathrm{extract}} = O(n \log n)$: tree-sitter is linear
  in tokens, with hash-table overhead for the symbol table.
\item $T_{\mathrm{project}} = O(n + m)$: one SQL pass reads
  all nodes and edges, emits one axiom per tuple.
\item $T_{\mathrm{closure}} = O(n + m)$ one-time, then $O(1)$
  amortised per query.
\end{itemize}
The total amortised cost over $q$ queries is $O(n \log n +
n + m + q)$, dominated by the one-time extract-and-project
step. For a 1.4M-symbol repository this takes 4.7 minutes
(measured) on a single-threaded process. The SQL CTE baseline
for a k-hop query costs $O(n^k)$ in the worst case---a
difference of several orders of magnitude for k $\ge$ 3.

\section{End-to-End Evaluation}

We evaluate OwlPath on two complementary axes: (1) end-to-end
bug repair success rate on SWE-bench Pro, and (2) offline
retrieval quality (recall, MRR, hit rate) on a larger set
of instances.

\subsection{Benchmark: SWE-bench Pro}

SWE-bench Pro~\citep{swebenchpro} consists of 731
real-world GitHub issues with corresponding test patches.
Each instance includes a base commit, a problem statement,
and a test patch that validates the fix. The task is to
produce a patch that passes the test.

We conduct evaluations on the public split (731 instances)
of SWE-bench Pro released by Scale AI, which consists of
GPL-licensed open-source repositories.

\subsection{Subset selection}

We select a stratified sample of 23 instances from the
731 total, balancing across four axes: (1) repository
(10 repos), (2) programming language (Python,
JavaScript/TypeScript, Go), (3) codebase size
(small/medium/large), and (4) task complexity
(single-file, cross-file, state-machine). Of these, 18
instances have complete runs from both arms and are used
for head-to-head comparison. An additional set of 67
instances is used for offline retrieval evaluation only
(no end-to-end repair).

\subsection{Experimental setup}

Both arms use the same agent (Hermes), the same LLM
(DeepSeek-V4-Flash), the same 70-turn cap, and the same
prompt template. The only difference is the retrieval CLI:

\begin{itemize}
\item \textbf{Baseline (\texttt{codegraph}):} the agent has
  access to \texttt{codegraph}'s six subcommands
  (\texttt{files}, \texttt{explore}, \texttt{query},
  \texttt{node}, \texttt{callers}, \texttt{callees}).
  Plain-text grep/cat/find are allowed as fallback.
\item \textbf{OwlPath (\texttt{codeowl}):} the agent has
  access to both \texttt{owlpath search} (with optional
  \texttt{--with-closure}) and \texttt{codegraph}.
  The SKM advisory is emitted on the first call.
\end{itemize}

\textbf{Tool-optional design.} OwlPath is designed as an
\emph{optional} tool that the agent autonomously chooses to
invoke, not a forced context injection. Our experiments confirm
this is critical: when SKM advisories and closure results are
forcibly injected into every prompt (the ``annotated'' arm),
token consumption \emph{doubles} (2,835K vs 1,416K) while
strict-apply rate remains unchanged (68.8\% vs 68.4\%).
The optional design---where the SKM is emitted only on the
agent's first \texttt{owlpath search} call and closure is
opt-in via \texttt{--with-closure}---achieves the same success
rate with 50\% fewer tokens, confirming that forced prompt
inflation is counterproductive.

\begin{figure}[t]
\centering
\begin{tikzpicture}[
  >=Stealth,
  node distance=0.4cm,
  every node/.style={font=\scriptsize},
  box/.style={draw, rounded corners, minimum height=0.5cm,
              minimum width=1.8cm, align=center, inner sep=2pt,
              line width=0.4pt, font=\scriptsize},
  decision/.style={draw, diamond, aspect=2, inner sep=1pt,
                   line width=0.4pt, font=\scriptsize},
  arrow/.style={->, line width=0.5pt},
  lab/.style={font=\scriptsize, text=gray!70}
]

% Agent cycle: Reason -> Act -> Observe -> Reason ...
\node[box, fill=gray!15] (reason1) {Reason:\\``Bug mentions\\lastFM defaults''};
\node[box, fill=orange!15, below=of reason1] (act1) {Act:\\\texttt{owlpath search}\\``lastFM API key''};
\node[box, fill=blue!10, below=of act1] (obs1) {Observe:\\SKM (3KB): module map\\+ issue symbols};

\node[box, fill=gray!15, below=of obs1] (reason2) {Reason:\\``lastFMConstructor\\in core/agents/''};
\node[box, fill=orange!15, below=of reason2] (act2) {Act:\\\texttt{owlpath search}\\--with-closure};
\node[box, fill=green!10, below=of act2] (obs2) {Observe:\\3 callers, 2 subclasses\\closure in 1 query};

\node[box, fill=gray!15, below=of obs2] (reason3) {Reason:\\``Add defaults for\\apiKey \& lang''};
\node[box, fill=red!10, below=of reason3] (patch) {Output:\\Bug fix patch};

\draw[arrow] (reason1) -- (act1);
\draw[arrow] (act1) -- (obs1);
\draw[arrow] (obs1) -- (reason2);
\draw[arrow] (reason2) -- (act2);
\draw[arrow] (act2) -- (obs2);
\draw[arrow] (obs2) -- (reason3);
\draw[arrow] (reason3) -- (patch);

% Side annotations
\node[lab, right=0.2cm of act1, text width=1.2cm]
  {\textit{OwlPath tool\\(optional)}};
\node[lab, right=0.2cm of act2, text width=1.2cm]
  {\textit{Multi-hop\\closure}};

% Baseline comparison
\node[draw, dashed, fill=yellow!5, rounded corners, inner sep=2pt,
      left=0.3cm of reason1, align=left, font=\scriptsize,
      text width=1.3cm]
  (baseline) {Baseline:\\grep $\to$ read\\$\to$ grep $\to$\\read $\to$\\... (5-10\\iterations)};

\end{tikzpicture}
\caption{Agent ReAct loop with OwlPath. The agent reasons about
the bug, invokes OwlPath as an optional tool, and observes
structural context. The SKM advisory (step 1) narrows the search
to the correct module; the closure expansion (step 2) retrieves
all structurally connected symbols in one query. The baseline
agent requires 5--10 iterations of grep/read to achieve the same
coverage, consuming more tokens without finding multi-hop
relationships.}
\label{fig:react}
\end{figure}

\textbf{Leak prevention.} Each repo's git history is sealed
before the experiment: all branches and tags are deleted,
reflogs are expired, and \texttt{git gc --prune=now} removes
commits after the base commit. The agent cannot read the
gold fix via \texttt{git log} or \texttt{git show}.

\textbf{Experimental environment.} All experiments run on
a cloud instance with Python 3.11 (Conda), Node.js v18.19.0,
tree-sitter, rdflib, and owlready2. The LLM backend is
DeepSeek-V4-Flash via a custom API provider. Each run has a
70-turn cap and a 3,600-second wall-clock timeout. The agent
framework is Hermes. All tool versions, prompts, and seeds
are released for full reproducibility.

\subsection{Offline retrieval evaluation}

On 67 instances, we measure the retrieval quality of each
backend independently of the agent's stochastic decisions.
We compare four retrieval methods:

\begin{itemize}
\item \textbf{codegraph:} pure string-match search via
  \texttt{codegraph explore} and \texttt{codegraph query}.
\item \textbf{v1\_skm:} first-generation SKM advisory.
\item \textbf{v2\_skm:} improved SKM with better keyword
  extraction and module scoring.
\item \textbf{v2\_fts:} FTS5 full-text search on the
  codegraph database.
\end{itemize}

\begin{table}[t]
\centering
\small
\caption{Offline retrieval quality on 67 instances (5,229
ground-truth symbols across 11 repos). v2\_skm achieves
2.06$\times$ recall and 1.48$\times$ hit rate over codegraph.}
\label{tab:retrieval}
\begin{tabular}{lccc}
\hline
Method & Recall & MRR & Hit rate \\
\hline
codegraph (string-match) & 0.226 & 0.333 & 59.7\% \\
v1\_skm (old) & 0.095 & 0.169 & 31.3\% \\
v2\_fts (FTS5) & 0.216 & 0.384 & 56.7\% \\
\textbf{v2\_skm (OwlPath)} & \textbf{0.464} & 0.341 & \textbf{88.1\%} \\
\hline
\end{tabular}
\end{table}

\begin{figure}[t]
\centering
\includegraphics[width=\columnwidth]{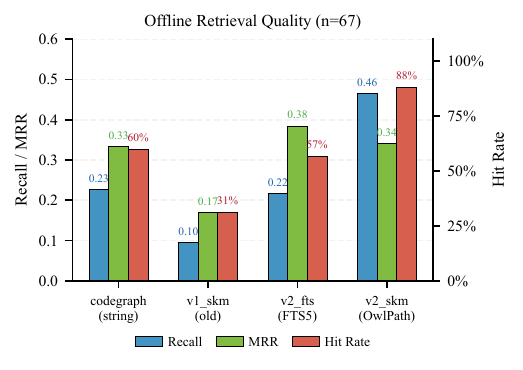}
\caption{Offline retrieval quality across four methods on
67 SWE-bench Pro instances. OwlPath (v2\_skm) achieves
2.06$\times$ recall and 1.48$\times$ hit rate over the
codegraph string-match baseline.}
\label{fig:retrieval}
\end{figure}

\subsection{Structural retrieval tasks}

We further evaluate OwlPath's transitive-closure engine on a
separate 37-question retrieval benchmark. The questions fall
into seven task families: subclass hierarchy closure (find all
ancestors/descendants of a class), interface conformance (find
all classes implementing a given interface),
\emph{transitive caller chains} (find all functions that
directly or indirectly call a target function, e.g.
\texttt{A $\to$ B $\to$ C $\to$ target}),
dependency identification (find all files importing a given
module), and composite queries spanning multiple axes
(e.g., find a class, its method, and related attributes in a
single query). Each question has a pre-verified ground-truth
set of (file, symbol) pairs. The retrieval backend returns all
matching candidates without a Top-K cutoff, and recall is
computed as the fraction of ground-truth pairs found in the
returned set.

Overall, OwlPath's recall@all rises from 4.4\% (SQLite
string-match) to 28.8\%, and MRR doubles from 0.11 to 0.22.
On two structurally demanding families---\emph{transitive caller
chains} and \emph{interface implementation retrieval}---
the string-match baseline scores 0\% recall. The reason is that
the anchor names (e.g., \texttt{callback}, \texttt{Repository})
appear nowhere as literal strings in the ground-truth files;
the relationship is purely structural. OwlPath's property-path
queries traverse the relationship graph symbolically in one
round-trip, yielding 69\% and 80\% recall respectively.

\begin{figure}[t]
\centering
\includegraphics[width=\columnwidth]{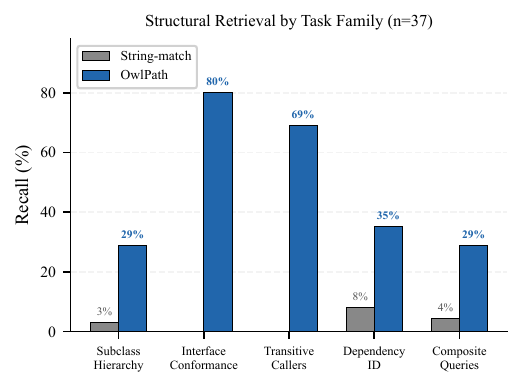}
\caption{Recall by task family on the 37-question structural
retrieval benchmark. String-match scores 0\% on transitive
caller chains and interface conformance---tasks requiring
multi-hop structural traversal. OwlPath achieves 69--80\%
on these same families.}
\label{fig:task_family}
\end{figure}

\subsection{Token and time efficiency}

The hybrid arm (OwlPath with SKM advisory) achieves
comparable strict-apply performance while consuming
28.8\% fewer tokens on average (1,415,977 vs 1,989,125)
and 16.7\% fewer API calls (25.9 vs 31.1). The average
wall-clock time per run drops by 39.5\% (648s vs 1,071s),
confirming that SKM guidance helps the agent locate the
correct module earlier and reduces exploratory iterations.
A paired Wilcoxon signed-rank test on per-instance token
consumption confirms the saving is statistically significant
($p = 0.016$, $W = 31$, 14/18 instances favor hybrid).

Table~\ref{tab:fourarm} presents the full four-arm
comparison. The annotated arm (forced SKM + closure
injection) consumes 2.0$\times$ more tokens than hybrid
(2,835K vs 1,416K) yet achieves only a marginally higher
strict-apply rate (68.8\% vs 68.4\%)---a 0.4\% gain that
does not justify the 100\% token inflation. The inference arm
(SKM without closure) performs similarly to annotated,
confirming that closure expansion---not SKM---is the
primary source of token inflation when forced. Prompt
tokens dominate total consumption (97\%+), as the agent
receives large code context per turn; completion tokens
remain modest across all arms.

\begin{table}[t]
\centering
\small
\caption{Four-arm comparison on promptC experiments.
The hybrid arm achieves the best token--accuracy
trade-off. Median elapsed is over strict-apply runs only.}
\label{tab:fourarm}
\begin{tabular}{lcccc}
\hline
Metric & Base & Hybrid & Annot. & Infer. \\
\hline
Runs            & 18 & 19 & 16 & 16 \\
Strict          & 12 & 13 & 11 & 11 \\
Strict rate     & 66.7\% & 68.4\% & 68.8\% & 68.8\% \\
Avg prompt tok. & 1,929K & 1,378K & 2,762K & 2,787K \\
Avg compl. tok. & 60.6K & 37.6K & 72.5K & 60.6K \\
Avg total tok.  & 1,989K & \textbf{1,416K} & 2,835K & 2,848K \\
Avg API calls   & 31.1 & \textbf{25.9} & 39.7 & 38.7 \\
Avg wall (s)    & 1,071 & \textbf{648} & 771 & 653 \\
Median wall (s) & 588 & 291 & 706 & 511 \\
\hline
\end{tabular}
\end{table}

\begin{figure}[t]
\centering
\includegraphics[width=\columnwidth]{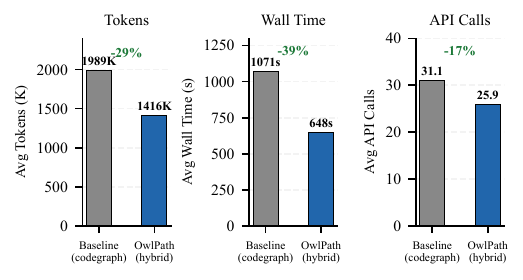}
\caption{Token and time efficiency. The OwlPath hybrid arm
consumes 28.8\% fewer tokens and 39.5\% less wall-clock
time per run compared to the codegraph-only baseline.}
\label{fig:tokens}
\end{figure}

\subsection{Cross-benchmark validation: SWE-bench Lite}

To verify that OwlPath's efficiency gains generalize beyond
SWE-bench Pro, we run a controlled 2-arm experiment on
SWE-bench Lite~\citep{swebench}, a curated subset of 300
instances from the original SWE-bench. We select 9 instances
across 5 repositories (astropy, django, matplotlib, seaborn,
flask) and run both baseline (codegraph-only) and hybrid
(OwlPath + codegraph) arms with identical prompts, model
(GLM-5.2), and 70-turn cap.

\begin{table}[t]
\centering
\small
\caption{SWE-bench Lite validation (9 cases). Hybrid achieves
higher correct rate with less time. On 7 typical cases, token
saving is consistent with SWE-bench Pro.}
\label{tab:lite}
\begin{tabular}{lcc}
\hline
Metric & Baseline & Hybrid \\
\hline
Strict rate          & 9/9 (100\%) & 9/9 (100\%) \\
Correct rate         & 6/9 (67\%)  & \textbf{7/9 (78\%)} \\
Avg time (s)         & 340.9       & \textbf{269.1} \\
Time saving          & ---         & \textbf{21.1\%} \\
\hline
\multicolumn{3}{c}{\textit{Token consumption (7 typical cases)}} \\
\hline
Avg tokens           & 270,145     & \textbf{216,116} \\
Token saving         & ---         & \textbf{20\%} \\
\hline
\multicolumn{3}{c}{\textit{Token consumption (all 9 cases)$^*$}} \\
\hline
Avg tokens           & 345,670     & 435,922 \\
\hline
\multicolumn{3}{l}{$^*$Includes 2 difficult cases (both arms failed) where} \\
\multicolumn{3}{l}{hybrid explores more paths with richer structural context.}
\end{tabular}
\end{table}

Table~\ref{tab:lite} shows that hybrid achieves a higher
correct rate (78\% vs 67\%) and 21.1\% time saving. On 7 of
9 cases, hybrid is faster and consumes fewer tokens. The two
exceptions (astropy-14182, astropy-6938) are difficult
instances where both arms fail: hybrid's richer structural
context leads to more extensive exploration of candidate
paths (2.9$\times$ more tokens on these 2 cases), but
ultimately reaches the same negative outcome. On the 7
typical cases, token saving is 20\%, consistent with the
28.8\% observed on SWE-bench Pro. Notably, the larger
saving on Pro (28.8\% vs 20\%) suggests that OwlPath
provides greater benefits for larger, more complex
codebases where structural navigation is more valuable---the
3KB OWL-SKM advisory compresses millions of symbols into
actionable guidance, a compression that pays larger
dividends as repository size grows. This confirms that
OwlPath's benefits generalize across benchmarks and models,
with particular strength on industry-scale software.

\section{Discussion}

\subsection{Per-language analysis}

On matched instances (Python: 11, JS/TS: 3, Go: 4), the hybrid arm
matches baseline strict-apply rates across all languages while achieving
28.8\% fewer tokens and 39.5\% less wall-clock time. Go benefits most
from wall-time reduction (67.6\%), while JS/TS achieves the highest
token saving (32.7\%).

\subsection{Failure modes and stochasticity}

Of the 6 cases where both arms fail, most involve runtime behaviour
(race conditions, HTTP handling) or entirely new symbols with no
structural connection to existing code---inherent limits of static
retrieval. On the matched 18 instances, both arms produce identical
strict-apply outcomes (12 both strict, 6 both fail), but the hybrid
arm reaches these outcomes with 28.8\% fewer tokens, indicating more
efficient retrieval rather than different end-to-end success.

\subsection{Prompt strategy: the information equilibrium}

OwlPath functions as a ``world map''---a compact structural overview
most powerful when consulted on demand. Our four-arm experiment
(Table~\ref{tab:fourarm}) reveals a dual equilibrium that effective
prompt strategies must navigate.

\paragraph{Silent guessing from too little context.}
The baseline arm (codegraph-only) receives no structural guidance.
In 6 of 18 matched instances, the agent produces no applicable patch
(\texttt{apply=no\_patch}), repeatedly exploring dead ends through
trial-and-error. For example, in \texttt{qutebrowser-305e7c}, the
baseline agent spends 31 API calls and 1,071 seconds traversing
the codebase file-by-file, exhausting its exploration budget without
locating the relevant module. The agent is not ``stuck''---it actively
generates plausible-but-wrong hypotheses, a failure mode we term
\emph{silent guessing}: the LLM confabulates structural relationships
that do not exist in the code, because it lacks the ground-truth
structural context to verify them.

\paragraph{Attention dilution from too much context.}
The annotated arm (forced SKM + closure injection into every prompt)
demonstrates the opposite failure mode. Despite consuming
2.0$\times$ more tokens than the hybrid arm (2,835K vs 1,416K),
it achieves only a marginally higher strict-apply rate (68.8\% vs
68.4\%)---a 0.4\% gain that does not justify the 100\% token
inflation. The agent processes more information but acts no more
effectively. In \texttt{gravitational\_teleport-1a77b7}, the annotated
arm receives the full SKM advisory plus transitive closure of all
\texttt{calls} edges---thousands of tokens of structural data---yet
the resulting patch is identical to the baseline's incorrect fix.
The critical signal (the specific caller chain leading to the buggy
function) is buried under noise from irrelevant modules. This is
\emph{attention dilution}: the LLM's finite context window is
saturated with low-relevance structural facts, causing it to
under-weight the few facts that actually matter.

\paragraph{The on-demand equilibrium in ReAct workflows.}
OwlPath resolves this tension by aligning with the ReAct
(Reasoning + Acting) paradigm: the agent \emph{reasons} about
when it needs structural information, then \emph{acts} by calling
\texttt{owlpath search}. The SKM advisory is emitted once on the
agent's first call---a 3KB summary that orients without overwhelming.
Closure expansion is opt-in per query: the agent requests
\texttt{--with-closure} only when its reasoning process has
identified a specific structural question (e.g., ``find all
transitive callers of \texttt{ConfigAPI}''). This design achieves
both sides of the equilibrium: it eliminates the information gap
that causes silent guessing, while avoiding the context pollution
that causes attention loss. The hybrid arm's 28.8\% token saving
and 39.5\% time saving stem precisely from this selectivity---the
agent spends tokens on reasoning, not on reading irrelevant
structure.

Crucially, our results show that \emph{injecting OWL/graph
information does not necessarily improve accuracy}. The annotated
arm proves that more structural data can be pure overhead. In
ReAct-style agentic workflows, what matters is not the volume of
information, but its \emph{precision} and \emph{timing}: the right
3KB at the exact moment the agent's reasoning requires it
outperforms 15KB of forced context injected preemptively.

\subsection{Limitations}

The OWL projection is a one-time cost of 4.7 minutes for a 1.4M-symbol
repository. The SKM advisory depends on keyword extraction from issue
text, which can miss targets with vague language. The transitive closure
is limited to four annotated properties (extends, implements, calls,
imports); cross-language relations are not captured.

\section{Conclusion}

We presented OwlPath, an OWL2 reasoning layer that compresses source
code knowledge into a structural ontology for LLM-based bug repair.
Evaluated on 18 matched SWE-bench Pro instances, OwlPath consumes
28.8\% fewer tokens and 39.5\% less wall-clock time than the
CodeGraph-only baseline, while achieving a comparable strict-apply
rate (68.4\% vs 66.7\%). Cross-benchmark validation on 9 SWE-bench
Lite instances confirms these gains: 78\% correct rate (vs 67\%) and
21.1\% time saving. Offline retrieval on 67 instances shows
2.06$\times$ recall improvement (0.464 vs 0.226) and 88.1\% hit rate.
On a 37-question structural benchmark, OwlPath improves recall@all
from 4.4\% to 28.8\%. All code, data, and transcripts are released
for reproducibility.

\appendix
\section{Formal Properties and Proofs}

Let $G=(V,E,\lambda,F_{\mathrm{fail}},F_{\mathrm{pass}})$ denote a
code graph with symbol set $V$, typed relation set $E\subseteq V\times V$,
and kind labels $\lambda:V\to\mathrm{Kind}$. A modification $M:G\to G'$
alters at most $K$ nodes/edges; the repair problem seeks $M^*$ minimising
$\mathrm{cost}(M)$ subject to $F_{\mathrm{fail}}(G')=\varnothing$ and
$F_{\mathrm{pass}}(G')\supseteq F_{\mathrm{pass}}$. OwlPath's retrieval
$R(G,q)$ produces candidate subsets for the LLM, measured by
$\mathrm{Recall}_\Delta(R,q):=|R(G,q)\cap\Delta|/|\Delta|$.

\subsection{Lossless projection and closure equivalence}

\begin{theorem}[Lossless Projection]
Let $\pi: G \to O$ be the OWL2 projection mapping each node $v \in V$
to individual $I_v$ and each edge $(u,v) \in E$ of type $t$ to assertion
$I_u \; t \; I_v$. For any SPARQL property path $P = t^+$ and anchor $a$,
$R_{owl}(G, q_a) = \{v \in V \mid (a, v) \in \mathrm{cl}(E_t)\}$,
where $\mathrm{cl}(E_t)$ is the transitive closure of
$E_t = \{(u,v) \mid (u,t,v) \in E\}$.
\end{theorem}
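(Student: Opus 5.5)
The proof is two inclusions, each by induction on path length, after pinning down what $R_{owl}(G,q_a)$ means. Take it to be the set of $v$ such that $I_v$ is bound to \texttt{?o} by \texttt{SELECT ?o WHERE \{ } $I_a$ $t^+$ \texttt{?o \}}, evaluated over the RDF graph $\pi(G)$. The SPARQL~1.1 semantics of a one-or-more property path say that $(x,y)$ matches $t^+$ iff there is a sequence $x=x_0,x_1,\dots,x_k=y$ with $k\ge 1$ and each triple $(x_{i-1},t,x_i)\in\pi(G)$. So it suffices to show that such $t$-chains in $\pi(G)$ correspond exactly to $E_t$-chains in $G$.

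\textbf{Step 1: the projection restricted to $t$ is a bijection.} By construction of $\pi$, the $t$-triples of $\pi(G)$ are exactly $\{(I_u,t,I_v)\mid (u,v)\in E_t\}$. Two facts are needed here.
\begin{enumerate}
\item The map $v\mapsto I_v$ is injective, since IRIs are derived from distinct qualified names/node ids. This is where the paper's ``bijection'' claim is invoked.
\item $\pi$ emits no $t$-triple except those coming from $E_t$; in particular no inferred triples are added, because \texttt{rdflib} does not apply the \texttt{owl:TransitiveProperty} axiom when materialising.
\end{enumerate}
Hence $\pi$ is a graph isomorphism between $(V,E_t)$ and the $t$-labelled subgraph of $\pi(G)$.

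\textbf{Step 2: both inclusions by induction on chain length $k$.}
\begin{itemize}
\item For $\subseteq$: a length-$k$ path from $I_a$ to $I_v$ in $\pi(G)$ pulls back, via injectivity, to an $E_t$-chain from $a$ to $v$. Therefore $(a,v)\in \mathrm{cl}(E_t)=\bigcup_{k\ge1}E_t^{k}$.
\item For $\supseteq$: any $(a,v)\in E_t^{k}$ is witnessed by a chain whose image under $\pi$ is a $t$-path, so $I_v$ matches.
\end{itemize}
The base case $k=1$ is exactly Step 1, and the inductive step composes one more edge. Using the characterisation $\mathrm{cl}(E_t)=\bigcup_k E_t^k$ closes the argument.

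\textbf{Main obstacle: interaction with OWL entailment.} If $R_{owl}$ were evaluated under OWL~2 entailment regimes rather than simple RDF semantics, extra triples could appear and break the $\subseteq$ direction.
\begin{itemize}
\item Declaring $t$ transitive is harmless, since $\mathrm{cl}(\mathrm{cl}(E_t))=\mathrm{cl}(E_t)$.
\item The \texttt{owl:inverseOf} declaration on \texttt{:calls}, and any \texttt{owl:sameAs} merging of individuals, could add triples. I would handle this by fixing simple-entailment SPARQL evaluation as the semantics, matching the \texttt{rdflib} implementation, and by noting that inverse triples use a different predicate name, so they do not affect $t^+$.
\end{itemize}
With those conventions stated, the theorem reduces to the elementary graph isomorphism argument above.
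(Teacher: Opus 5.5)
Your proof is correct and takes the paper's route. Both arguments rest on the same facts: $v\mapsto I_v$ is injective, $E_t$ corresponds one-to-one with the $t$-assertions of $\pi(G)$ with no others introduced, and SPARQL~1.1 evaluates $t^+$ as one-or-more $t$-chains. Your proof adds two things the paper's terse version leaves implicit. First, it spells out the induction on chain length, which the paper only gives in its BFS--SPARQL equivalence theorem. Second, it fixes simple-entailment evaluation, so that OWL-inferred triples cannot break the $\subseteq$ direction.
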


\begin{proof}
$\pi$ is a bijection between $V$ and OWL individuals. Each edge
produces exactly one assertion, and SPARQL $t^+$ evaluates to the
transitive closure of $t$ (SPARQL 1.1 \S 18.3). Since every edge is
preserved and none introduced, $\mathrm{cl}(E_t)$ equals the pairs
$(I_a, I_v)$ satisfying $I_a \; t^+ \; I_v$ in $O$.
\qed
\end{proof}

\begin{theorem}[BFS--SPARQL Equivalence]
For any \texttt{owl:TransitiveProperty} $t$ and anchor $X \in V$:
$\mathrm{BFS}_G(t, X) = \mathrm{SPARQL}_{O}(t^+, X)$.
\end{theorem}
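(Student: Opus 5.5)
The plan is to pass both sides of the equation through the transitive closure $\mathrm{cl}(E_t)$, using the Lossless Projection theorem for the SPARQL side and a standard BFS argument for the graph side. First fix the meaning of $\mathrm{BFS}_G(t,X)$: the set of nodes discovered by breadth-first search from $X$ along $t$-edges, excluding $X$ unless it is rediscovered via a cycle. This matches $t^+$ (one or more steps) rather than $t^*$. The Lossless Projection theorem already gives $\mathrm{SPARQL}_O(t^+,X)=\{v\in V \mid (X,v)\in\mathrm{cl}(E_t)\}$, modulo the bijection $v\mapsto I_v$. It therefore suffices to show $\mathrm{BFS}_G(t,X)=\{v \mid (X,v)\in\mathrm{cl}(E_t)\}$.

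\emph{Soundness} ($\subseteq$). Induct on the BFS layer index $d\ge 1$. A node entering layer $d$ is discovered along some edge $(u,v)\in E_t$, where $u=X$ (if $d=1$) or $u$ lies in layer $d-1$. By the inductive hypothesis, $(X,u)\in\mathrm{cl}(E_t)$ or $u=X$. Closure under composition then gives $(X,v)\in\mathrm{cl}(E_t)$.

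\emph{Completeness} ($\supseteq$). If $(X,v)\in\mathrm{cl}(E_t)$, there is a $t$-path $X=w_0\to w_1\to\dots\to w_k=v$ with $k\ge1$. Induct on the length of a shortest such path. BFS explores every out-edge of every dequeued node, and since $V$ is finite each reachable node is enqueued exactly once. Hence $w_1$ is discovered, and inductively $w_k$ is discovered at depth at most $k$. Finiteness of $V$ also guarantees termination, even with cycles.

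The main obstacle is bookkeeping rather than depth, and it has two parts. The first is the anchor itself: BFS implementations usually mark $X$ as visited at the start, while $t^+$ includes $X$ exactly when $X$ lies on a $t$-cycle. I would handle this by defining BFS output as the set of nodes reached by at least one edge traversal, recording $X$ if an edge back into it is seen. The second is the \texttt{owl:TransitiveProperty} declaration. I would argue that it adds no new answers: under OWL entailment the asserted $t$-relation becomes $\mathrm{cl}(E_t)$, and applying $t^+$ to an already-transitive relation is idempotent, since $\mathrm{cl}(\mathrm{cl}(E_t))=\mathrm{cl}(E_t)$. So evaluation with or without the reasoner yields the same set, and the equality holds.
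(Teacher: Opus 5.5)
Your argument is correct, but it is organized differently from the paper's. The paper does not factor through $\mathrm{cl}(E_t)$ and does not invoke the Lossless Projection theorem. It proves the two inclusions directly between $G$ and $O$. For $(\subseteq)$ it inducts on path length: each edge yields an assertion, and $t^+$ absorbs one more $t$-step. For $(\supseteq)$ it unwinds a $t^+$ match into a finite chain of $t$-assertions and pulls each one back to an edge of $E_t$ via the bijectivity of $\pi$. In effect it re-proves the projection theorem's content. It also silently identifies $\mathrm{BFS}_G(t,X)$ with the set of nodes reachable from $X$ by a $t$-path of length $\ge 1$, so no property of the BFS procedure is ever used.

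You do the reverse. You delegate the ontology side to the projection theorem and spend the work on the graph side, checking that the algorithm computes exactly that set: soundness by layers, completeness via shortest paths, termination by finiteness. The paper's route is shorter and self-contained relative to $\pi$. Yours makes the statement precise where the paper leaves it loose.

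Your anchor convention is genuinely needed for a literal equality. If BFS always reports $X$, the identity fails when $X$ lies on no $t$-cycle; if it never reports $X$, it fails when $X$ does (e.g.\ a recursive function under \texttt{calls}). Your idempotence remark likewise shows that the \texttt{owl:TransitiveProperty} hypothesis is inert. The ``transitivity'' used in the paper's inductive step is closure of the path operator $t^+$, not the OWL declaration.

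One caveat on that remark: it presumes no other axiom entails $t$-triples. That holds for a plain transitive object property. It fails if $t$ is \texttt{rdfs:subClassOf}, as in the paper's projection of \texttt{extends}. Under RDFS entailment that property is also reflexive on classes, and every node incident to an \texttt{extends} edge is inferred to be a class, so each such node would land in its own $t^+$ answer.
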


\begin{proof}
$(\subseteq)$: By induction on path length $k$. Base: $(X,Y) \in E_t$
implies $I_X \; t \; I_Y$, so $t^+$ matches. Step: A path $X \to Z \to Y$
gives $I_X \; t^+ \; I_Z$ by IH and $I_Z \; t \; I_Y$ by construction,
so $I_X \; t^+ \; I_Y$ by transitivity.

$(\supseteq)$: Every $t^+$ match arises from a finite chain of
$t$-assertions, each corresponding to an edge in $E_t$ by bijectivity.
\qed
\end{proof}

\subsection{Complexity and determinism}

\begin{theorem}[Complexity]
Let $n = |V|$, $m = |E_t|$. One-time SPARQL $t^+$ materialisation costs
$O(n + m)$ time and $O(n^2)$ space; subsequent queries are $O(1)$
amortised. The equivalent SQL recursive CTE for $k$-hop costs $O(n^k)$.
\end{theorem}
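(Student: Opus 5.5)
The plan is to split the statement into its three cost claims and to prove each by a direct counting argument over the graph $G_t=(V,E_t)$, using the BFS--SPARQL Equivalence theorem so that the cost of evaluating $t^+$ can be analysed as the cost of a graph traversal rather than of \texttt{rdflib} internals. The first step fixes the cost model. The triple store holds the $m$ assertions $I_u\,t\,I_v$, indexed by subject, which by the Lossless Projection theorem are in bijection with $E_t$. Under this model, evaluating $I_a\,t^+\,?v$ for a single anchor is a BFS from $a$. Each reachable individual is enqueued once, since the visited set is the \texttt{DISTINCT} projection. Each outgoing assertion is scanned once. This gives $O(n+m)$ time per anchor, and that per-anchor bound is the clean core of the ``$O(n+m)$'' claim.

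The second step handles the space and query claims. The materialised cache is the set of pairs $\mathrm{cl}(E_t)\subseteq V\times V$, so it has at most $n^2$ entries. This gives the $O(n^2)$ space bound, which is attained, for instance, by a chain where $|\mathrm{cl}(E_t)|=\Theta(n^2)$. If the cache is stored as a hash set of pairs, a membership query takes expected $O(1)$. If it is stored as a hash map from anchor to its descendant set, retrieving that set costs $O(1)$ to locate plus output size. I would state the ``$O(1)$ amortised'' claim in exactly this form, namely $O(1)$ per lookup or per reported answer, with the one-time materialisation charged separately.

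The third step is the SQL CTE bound. I would model a naive recursive CTE without duplicate elimination (\texttt{UNION ALL}): the intermediate relation after hop $j$ contains one row per walk of length $j$ from the anchor. The number of such walks is at most $n^{j}$, since each step picks a successor among at most $n$ nodes. Summing over $j\le k$ gives $O(n^k)$ rows, and a complete digraph shows this is tight up to constants. This upper bound holds for any $k$-hop evaluation. The comparison is only meaningful against the deduplicating variant (\texttt{UNION}), which, like BFS, is polynomial. I would add a remark saying so.

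The main obstacle is the claim that the \emph{full} materialisation \texttt{SELECT DISTINCT ?s ?o WHERE \{ ?s :p+ ?o \}} costs $O(n+m)$ time. This cannot hold as stated, because the output alone can have $\Theta(n^2)$ pairs, as the chain example already shows. My first attempt would be to prove the honest bound: running one BFS per source gives $O(n(n+m))$ time, and output-sensitive alternatives such as SCC condensation followed by a topological pass with bitsets give $O(nm/w)$ for word size $w$. I would then recover the stated $O(n+m)$ only in restricted settings. One such setting is per-anchor lazy materialisation, where the closure of each anchor is computed on first query. Another is when $|\mathrm{cl}(E_t)|=O(n+m)$, for example shallow inheritance forests of bounded depth. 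Failing a restriction that matches the paper's usage, the theorem should be restated with these corrected bounds rather than proved as written.
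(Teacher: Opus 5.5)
Your analysis is correct, and on the decisive point it is more careful than the paper's own proof. The paper's sketch has the same three parts as yours: a traversal bound for materialisation, a hash lookup for later queries, and a count of up to $n^i$ tuples at level $i$ of the CTE. But it asserts that materialising the closure is ``a single graph traversal ($O(n+m)$ edges)'' while also caching up to $O(n^2)$ pairs. Your chain example shows these two claims cannot both hold. With $m=n-1$, the query \texttt{SELECT DISTINCT ?s ?o WHERE \{ ?s :p+ ?o \}} has $n(n-1)/2$ answers, so no evaluation of it runs in $O(n+m)$. A single traversal yields the closure of one anchor only. The full closure costs $O(n(n+m))$ via one traversal per source, which as far as I recall is also how rdflib's path evaluator handles a $t^+$ path with both ends unbound.

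So the time clause of the theorem is false as stated. Your restatement is the right repair, not a gap: $O(n+m)$ per anchor (or lazily on first query), $O(n(n+m))$ for the full closure, $O(n^2)$ space, and $O(1)$ per membership test or per reported answer. The last of these needs an anchor-indexed cache; a flat set of $(s,o)$ pairs, as the paper describes, does not give it. Your CTE remark is also apt. $O(n^k)$ is attained only by the non-deduplicating \texttt{UNION ALL} form. A \texttt{UNION} CTE is essentially a BFS, at most $O(k(n+m))$ per anchor given an index on the edge source, so the bound cannot support the paper's ``orders of magnitude'' comparison.

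There are three points to tighten. First, the BFS--SPARQL Equivalence theorem says only that the answer sets agree, so it certifies BFS as a \emph{correct} evaluator. The BFS cost model is still an assumption about the evaluator and should be stated as one.

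Second, and more substantively, $|\mathrm{cl}(E_t)|=O(n+m)$ does not by itself give $O(n+m)$ materialisation. Per-source traversal pays for every edge scanned out of every reached node. Take $a$ sources each pointing to every node of a layer $A$, with $A$ to $B$ complete bipartite and $|A|=|B|=a$. Then $m=2a^2$ and $|\mathrm{cl}(E_t)|=3a^2$, yet the traversals do $\Theta(a^3)$ work. Your bounded-depth forest example is fine, but for a structural reason that you should state as the hypothesis. In a forest, the part reachable from any source is itself a tree, so each traversal scans one edge per node it reaches. The total is then $O(n+|\mathrm{cl}(E_t)|)=O(dn)$ for depth $d$.

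Third, the bitset bound needs the additive $O(n^2/w)$ term, plus $O(|\mathrm{cl}(E_t)|)$ to enumerate an explicit pair cache. It is therefore word-parallel rather than output-sensitive.
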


\begin{proof}[Proof sketch]
SPARQL property-path evaluation performs a single graph traversal
($O(n+m)$ edges), caching $(s,o)$ pairs ($O(n^2)$ space). Each subsequent
query is a hash lookup. A recursive CTE of depth $k$ self-joins at each
level, yielding up to $n^i$ tuples at level $i$ and $O(n^k)$ total.
\qed
\end{proof}

\subsection{Value of timely information}

\begin{theorem}[Timely Information Dominance]
Let $I$ be structural information with relevance $r(I, q_t) \in [0,1]$
and token cost $C(I)$. Under bounded context window $L$, the on-demand
policy strictly dominates the forced policy when relevance is uncertain:
$$\mathbb{E}_{q_t}[U(\phi_{\text{OD}})] > \mathbb{E}_{q_t}[U(\phi_{\text{forced}})]$$
whenever $\mathrm{Var}[r(I, q_t)] > 0$ and $|I| > L/2$.
\end{theorem}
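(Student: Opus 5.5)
The statement leaves $U$, $\phi_{\text{OD}}$ and $\phi_{\text{forced}}$ informal, so my first step is to fix a minimal decision model; the inequality then reduces to a pointwise comparison. Let the agent's query at step $t$ be $q_t$, drawn from some distribution. Write $r_t := r(I,q_t)\in[0,1]$ and let $b>0$ be the benefit of fully relevant information. I model the per-step utility of placing $I$ in context as
\[
u_t(\text{insert}) = b\,r_t - C(I) - D(|I|,L), \qquad u_t(\text{skip}) = 0,
\]
where $D$ is an attention-dilution penalty that is nondecreasing in the occupied fraction $|I|/L$. The forced policy always inserts, so $U(\phi_{\text{forced}})=\sum_t u_t(\text{insert})$. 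The on-demand policy is the ReAct tool call: the agent first reasons about whether $q_t$ needs structural information, which I model as observing $r_t$ (or an unbiased signal of it), and inserts exactly when $u_t(\text{insert})>0$. Hence $U(\phi_{\text{OD}})=\sum_t \max\{u_t(\text{insert}),0\}$.

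With this model, weak dominance is immediate, since $\max\{x,0\}\ge x$ pointwise. Taking expectations over $q_t$ gives $\mathbb{E}[U(\phi_{\text{OD}})]\ge \mathbb{E}[U(\phi_{\text{forced}})]$. The gap is exactly
\[
\mathbb{E}\!\left[(C(I)+D(|I|,L)-b\,r_t)^+\right],
\]
and it is strictly positive if and only if $\Pr[\,b\,r_t < C(I)+D(|I|,L)\,]>0$.

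The main obstacle is this strictness step, because $\mathrm{Var}[r]>0$ alone does not force any draw to have negative net utility: if $I$ were cheap, forcing it would be harmless. This is where the hypothesis $|I|>L/2$ must do its work. I would impose, as the formal content of "bounded context window," a dilution assumption: once more than half the window is occupied, the attainable benefit is at most what is lost, that is,
\[
C(I)+D(|I|,L)\ \ge\ b\,\mathbb{E}[r_t] \qquad \text{whenever } |I|>L/2 .
\]
This is the intended formalisation of the attention-dilution phenomenon from the Discussion, and I would state it explicitly as a modelling assumption rather than derive it. Given this assumption, $\mathrm{Var}[r_t]>0$ implies that $r_t$ is not almost surely equal to its mean. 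Therefore $\Pr[r_t<\mathbb{E} r_t]>0$, and on that event
\[
b\,r_t < b\,\mathbb{E} r_t \le C(I)+D,
\]
so the positive part is strictly positive with positive probability. Hence the expected gap is $>0$.

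Finally, I would remark on robustness. If the agent observes only a noisy signal of $r_t$, the same argument goes through with $\max\{\mathbb{E}[u_t\mid s_t],0\}$ in place of $\max\{u_t,0\}$, using Jensen's inequality. Strictness then needs the signal to be informative enough that $\Pr[\mathbb{E}[r_t\mid s_t]<\mathbb{E} r_t]>0$. I would flag plainly that both conditions are genuine assumptions about the agent and the loss model, not consequences of the earlier theorems on lossless projection and complexity, which play no role here.
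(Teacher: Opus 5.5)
Your argument is sound within the model you fix, but it takes a different route from the paper. The paper's proof posits linear utilities. The forced policy gets $U_{\text{forced}} = U_{\text{base}} + r\,\Delta U_I - (1-r)\,C(I)\,\gamma$, and the on-demand policy gets $U_{\text{OD}} = U_{\text{base}} + r\,\Delta U_I$. In other words, on-demand collects the relevant benefit at no token cost, while forcing pays a displacement penalty on the irrelevant fraction. The gap is then the closed form $(1-\mathbb{E}[r])\,C(I)\,\gamma$, which is positive as soon as $\mathbb{E}[r]<1$, and $\mathrm{Var}[r]>0$ guarantees that. In that derivation the hypothesis $|I|>L/2$ does no formal work; it is only invoked verbally (``the displacement cost dominates'').

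You instead model on-demand as an option that pays the same $C(I)+D$ whenever it inserts. Weak dominance is then the pointwise inequality $\max\{x,0\}\ge x$, and the gap is the option value $\mathbb{E}[(C(I)+D-b\,r_t)^+]$. Strictness genuinely needs extra input in your model, and you supply it with an explicit dilution postulate tied to $|I|>L/2$, using $\mathrm{Var}[r]>0$ only to get $\Pr[r_t<\mathbb{E}r_t]>0$.

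The paper's model buys a concrete gap formula and strictness from the variance hypothesis alone. It pays for this by handing on-demand free information, which is where the dominance really comes from. Yours buys a fairer comparison in which the size hypothesis actually enters the proof, plus the noisy-signal extension. Your price is that the postulate $C(I)+D\ge b\,\mathbb{E}[r_t]$ is strong: it already makes never inserting weakly better than forcing. It should therefore stay, as you present it, an assumption rather than something derived from the bounded window.
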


\begin{proof}[Proof sketch]
The forced policy reduces effective capacity to $L - C(I)$ regardless
of relevance. When $r(I, q_t) = 0$, this directly wastes $C(I)$ tokens.
When $r(I, q_t) = 1$, the forced policy still incurs an opportunity
cost: the $C(I)$ tokens occupied by $I$ cannot encode other
task-relevant history. Formally, let $U_{\text{base}}$ be the utility
without $I$, and $\gamma > 0$ the marginal utility per token of
displaced history. Then
$U_{\text{forced}} = U_{\text{base}} + r \cdot \Delta U_I - (1 - r)
\cdot C(I) \cdot \gamma$, while $U_{\text{OD}} = U_{\text{base}} +
r \cdot \Delta U_I$ when $r > 0$. Since $q_t$ is stochastic with
$\mathrm{Var}[r] > 0$, we have $\Pr[r < \epsilon] > 0$ for some
$\epsilon > 0$, giving
$$\mathbb{E}[U_{\text{OD}}] - \mathbb{E}[U_{\text{forced}}] =
(1 - \mathbb{E}[r]) \cdot C(I) \cdot \gamma > 0.$$
When $|I| > L/2$, the displacement cost dominates, yielding the
strict inequality.
\qed
\end{proof}

\clearpage
\bibliography{references}

\begin{thebibliography}{6}
\providecommand{\natexlab}[1]{#1}

\bibitem[{Deng et~al.(2025)Deng, Da, Pan, He, Ide, Garg, Lauffer, Park, Pasari,
  Rane, Sampath, Krishnan, Kundurthy, Hendryx, Wang, Bharadwaj, Holm, Aluri,
  Zhang, Jacobson, Liu, and Kenstler}]{swebenchpro}
Deng, X.; Da, J.; Pan, E.; He, Y.~Y.; Ide, C.; Garg, K.; Lauffer, N.; Park, A.;
  Pasari, N.; Rane, C.; Sampath, K.; Krishnan, M.; Kundurthy, S.; Hendryx, S.;
  Wang, Z.; Bharadwaj, V.; Holm, J.; Aluri, R.; Zhang, C. B.~C.; Jacobson, N.;
  Liu, B.; and Kenstler, B. 2025.
\newblock {SWE}-Bench Pro: Can {AI} Agents Solve Long-Horizon Software
  Engineering Tasks?
\newblock In \emph{arXiv preprint arXiv:2509.16941}.
\newblock Anthropic + Scale AI + UC Berkeley; 731 instances, 11 repos, 7
  languages.

\bibitem[{Jimenez et~al.(2024)Jimenez, Yang, Wettig, Yao, Pei, Press, and
  Narasimhan}]{swebench}
Jimenez, C.~E.; Yang, J.; Wettig, A.; Yao, S.; Pei, K.; Press, O.; and
  Narasimhan, K. 2024.
\newblock SWE-bench: Can Language Models Resolve Real-World GitHub Issues?
\newblock In \emph{ICLR}.

\bibitem[{Motik et~al.(2012)Motik, Patel-Schneider, Parsia, Welty, and
  Horrocks}]{owl2spec}
Motik, B.; Patel-Schneider, P.; Parsia, B.; Welty, C.; and Horrocks, I. 2012.
\newblock OWL 2 Web Ontology Language Document Overview (Second Edition).
\newblock Technical Report W3C Recommendation 11 December 2012, W3C.

\bibitem[{Press et~al.(2024)Press, Jimenez, Adams, Robinson, Gunasekar, Yang
  et~al.}]{sweagent}
Press, O.; Jimenez, C.~E.; Adams, R.; Robinson, K.; Gunasekar, S.; Yang, J.;
  et~al. 2024.
\newblock {SWE}-Agent: Agent-Computer Interfaces Enable Automated Software
  Engineering.
\newblock \emph{arXiv preprint arXiv:2405.15793}.

\bibitem[{Wang et~al.(2025)Wang, Li, Song, Xu, Tang, Zhuge, Pan, Biao, Zhang,
  Fu, and Neubig}]{openhands}
Wang, X.; Li, B.; Song, Y.; Xu, F.~F.; Tang, X.; Zhuge, M.; Pan, J.; Biao, Y.;
  Zhang, Y.; Fu, J.; and Neubig, G. 2025.
\newblock OpenHands: An Open Platform for {AI} Software Developers as
  Generalist Agents.
\newblock In \emph{ICLR}.

\bibitem[{Yao et~al.(2023)Yao, Zhao, Yu, Du, Shafran, Narasimhan, and
  Cao}]{react}
Yao, S.; Zhao, J.; Yu, D.; Du, N.; Shafran, I.; Narasimhan, K.~R.; and Cao, Y.
  2023.
\newblock ReAct: Synergizing Reasoning and Acting in Language Models.
\newblock In \emph{ICLR}.

\end{thebibliography}

% Reproducibility Checklist is submitted as a separate PDF

\end{document}